\newtheorem{proposition}{Proposition}[section]
\newtheorem{theorem}{Theorem}[section]
\theoremstyle{definition}
\newtheorem{definition}{Definition}[section]
\theoremstyle{remark}
\newtheorem{rmks}{Remarks}[section]
\newtheorem*{Examples}{Examples}
\newcommand{\defin}[1]{\emph{\textbf{#1}}}
\newcommand{\C}{\mathscr{C}}
\newcommand{\N}{\mathbb{N}}
\newcommand{\res}[1]{\mathbin\upharpoonright_{#1}}
\title{A Rice-like theorem for primitive recursive functions}
\author{Mathieu Hoyrup}
\begin{document}
\maketitle
\begin{abstract}
We provide an explicit characterization of the properties of primitive recursive functions that are decidable or semi-decidable, given a primitive recursive index for the function. The result is much more general as it applies to any c.e.\ class of total computable functions. This is an analog of Rice and Rice-Shapiro theorem, for restricted classes of total computable functions.
\end{abstract}

\section{Introduction}
Rice theorem \cite{Rice53} states that no non-trivial property of partial computable functions can be decided when the function is presented by one of its indices, or equivalently by a program computing it. Rice-Shapiro theorem \cite{Shapiro56} is a refinement that identifies the properties that are \emph{semi-decidable} when the input function is given by one of its indices, and states that they are exactly the properties that are semi-decidable if the input function is presented by an oracle.

When restricting to the class of total computable functions, Kreisel-Lacombe-Sch\oe nfield \cite{KreLacSch57} and Ceitin \cite{Ceitin62} theorem implies that the decidable properties are the same when presenting the input function via an index or an oracle. However Friedberg \cite{Friedberg58} constructed a set that is semi-decidable when the input function is given by an index, but not when it is given by an oracle. 

In this note we restrict our attention to subclasses of total computable functions, such as the class of primitive recursive functions, the class FP of polynomial-time computable functions, the class of provably total functions in some fixed recursive system (Peano arithmetic, say), etc. The general problem is to understand what can be decided/semi-decided about a function $f$ in such a class if $f$ is given by one of its indices: here an index is essentially a program in a restricted language that only allows to compute functions in the class. The function that decides or semi-decides the property is a usual computable function and is not assumed to belong to the class.

We give a characterization of the decidable and semi-decidable properties, i.e.\ an analog of Rice, Rice-Shapiro and Kreisel-Lacombe-Sch\oe nfield/Ceitin theorems. This characterization uses a version of Kolmogorov complexity restricted to the class of functions under consideration.

\paragraph{Notations.}
Let $\N^*$ be the set of finite sequences of natural numbers and $\N^\N$ the Baire space of functions from $\N$ to $\N$. Given $f\in\N^\N$ and $n\in\N$, $f\res{n}$ denotes the finite sequence $(f(0),\ldots,f(n-1))\in\N^*$. We say that $f\in\N^\N$ extends $v=(v_0,\ldots,v_{n-1})\in\N^*$ if $f(0)=v_0,\ldots,f(n-1)=v_{n-1}$, i.e.\ if $f\res{n}=v$. We denote by $[v]\subseteq\N^\N$ the set of all extensions of $v$ and call it a cylinder. An open set of the Baire space is a union of cylinders. An effective open set is the union of a computable sequence of cylinders.

\section{The result}
Let $\C$ be a class of total computable functions that is computably enumerable: there is a numbering $\C=\{f_i:i\in\N\}$ such that $f_i$ is computable uniformly in $i$, i.e.\ the mapping $(i,n)\mapsto f_i(n)$ is computable. If $f\in\C$ then a $\C$-index of $f$ is any $i$ such that $f=f_i$ (a function may have several indices).

Examples of such classes are: the primitive recursive functions, the class FP of polynomial-time computable functions, the class of provably total computable functions in Peano arithmetic.

First observe that if $f\in\C$ is given by an oracle then a set $A\subseteq\C$ is semi-decidable exactly when $A$ is the intersection of an effective open subset of the Baire space with $\C$; hence $A$ is decidable exactly when both $A$ and $\C\setminus A$ are the intersections of effective open sets with $\C$. Indeed, the machine semi-deciding $f\in A$ accepts $f$ in finite time so it has only read a finite segment of $f$ hence it will accept all functions in some cylinder $[f\res{n}]$.

In order to identify the properties that are decidable or semi-decidable, given a $\C$-index of the input function, we introduce a notion of Kolmogorov complexity adapted to the class $\C$.
\begin{definition}
The $\C$-complexity of $f:\N\to\N$ is
\begin{equation*}
K_\C(f)=\begin{cases}
\min\{i:f_i=f\}&\text{if $f\in\C$,}\\
+\infty&\text{otherwise.}
\end{cases}
\end{equation*}

If $v=(v_0,\ldots,v_n)$ is a finite sequence of natural numbers then its $\C$-complexity is
\begin{equation*}
K_\C(v)=\min\{i:f_i\text{ extends }v\}=\min\{K_\C(f):f\in[v]\}.
\end{equation*}
\end{definition}

Observe that for $f:\N\to\N$, $K_\C(f\res{n})$ is nondecreasing and converges to $K_\C(f)$. By the assumptions on $\C$, the quantity $K_\C(v)$ is computable from $v$, contrary to usual Kolmogorov complexity which is upper semi-computable only. However $K_\C$ usually does not belong to the class $\C$ (modulo encoding of $\N^*$ in $\N$).

It could seem more consistent with usual notions of Kolmogorov complexity (see e.g.\ \cite{Vit93}) to take for instance $\log(i)$ instead of $i$ in the definition, or to use a machine that is universal for the class $\C$ and define $K_\C$ in terms of the size of its inputs. All these choices are equally acceptable and lead exactly to the same result. The important point is that for each such notion of complexity $K'$, an upper bound on $K_\C(f)$ can be uniformly computed from any upper bound on $K'(f)$ and vice-versa. Here we take the simplest definition of complexity following directly from the enumeration of $\C$, to avoid technicality.


First we give a class of decidable properties. Let $h:\N\to\N$ be a computable order, i.e.\ a computable non-decreasing unbounded function.

\begin{definition}
Let $\C$ be an effective class of functions and $h$ a computable order. We define the set $A_{\C,h}$ of \defin{$(\C,h)$-anticomplex} functions as
\begin{equation*}
A_{\C,h}=\{f:\N\to\N:\forall n, K_\C(f\res{n})\leq h(n)\}.
\end{equation*}
\end{definition}

\begin{proposition}\label{prop_A}
For $f\in\C$, the property $f\in A_{\C,h}$ is decidable given any $\C$-index of $f$.
\end{proposition}
\begin{proof}
Given an index $i$ for $f$, one has $K_\C(f\res{n})\leq K_\C(f)\leq i$  for all $n$, so $f$ belongs to $A_{\C,h}$ if and only if $K_\C(f\res{n})\leq h(n)$ for all $n$ such that $h(n)<i$, which is decidable as $K_\C(f\res{n})$ is computable from $i$ and $n$ and only a finite number of values of $n$ has to be checked.
\end{proof}

In general $A_{\C,h}$ is no more decidable if instead of giving an index of $f$ one is only given $f$ as oracle. It contrasts with what happens on the class of partial or total computable functions.

\begin{proposition}
If $\C$ is dense in $\N^\N$ then $A_{\C,h}$ has empty interior in $\C$ (i.e.\ does not contain any intersection of a cylinder with $\C$), therefore $A_{\C,h}$ is not semi-decidable when the input function is given as oracle.
\end{proposition}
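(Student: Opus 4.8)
\noindent The plan is to prove the two assertions in turn: first that the complement of $A_{\C,h}$ meets every nonempty relatively open subset of $\C$, and then to deduce non-semi-decidability from this together with the oracle characterization recalled before Proposition~\ref{prop_A}. For the empty-interior claim it suffices to show that for every finite sequence $v$ the set $[v]\cap\C$ is not contained in $A_{\C,h}$; since $\C$ is dense, every cylinder $[v]$ already meets $\C$, so these are exactly the nonempty basic relatively open sets, and it is enough to produce, for each $v$, some $f\in[v]\cap\C$ with $f\notin A_{\C,h}$.

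Fix $v$ of length $k$ and set $M=h(k+1)+1$. The key observation is that a single well-chosen extension already forces high complexity: the set $\{f_i(k):i<M\}$ is finite, so I can pick a value $c$ outside it and let $w=(v_0,\ldots,v_{k-1},c)$. Then no $f_i$ with $i<M$ extends $w$, since any function extending $w$ takes the value $c$ at coordinate $k$ while, by the choice of $c$, none of $f_0,\ldots,f_{M-1}$ does; hence $K_\C(w)\geq M>h(k+1)=h(|w|)$. Here is where density does the real work: the cylinder $[w]$ is nonempty and open, so it contains some $f\in\C$, and for this $f$ we have $K_\C(f\res{k+1})=K_\C(w)>h(k+1)$, witnessing $f\notin A_{\C,h}$ while $f\in[v]\cap\C$. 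As $v$ was arbitrary, $A_{\C,h}$ has empty interior in $\C$.

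For the second assertion, note first that $A_{\C,h}$ is nonempty: the function $f_0$ satisfies $K_\C(f_0\res{n})=0\leq h(n)$ for all $n$, so $f_0\in A_{\C,h}$. Now suppose for contradiction that $A_{\C,h}$ were semi-decidable from an oracle. By the characterization recalled above, there would be an effective open set $U$ with $A_{\C,h}=U\cap\C$. Choosing any $f\in A_{\C,h}$, openness of $U$ yields a cylinder $[v]$ with $f\in[v]\subseteq U$, whence $\emptyset\neq[v]\cap\C\subseteq U\cap\C=A_{\C,h}$, contradicting the empty interior just established.

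The only real obstacle is the middle step, and it is more a matter of finding the right viewpoint than of technical difficulty: one must realize that density lets us prescribe an arbitrary finite prefix and still land inside $\C$, so that the combinatorially trivial task of diagonalizing against the finitely many functions $f_0,\ldots,f_{M-1}$ at a single coordinate is already enough to defeat the bound $h$. I should double-check that one extra coordinate truly suffices (it does, because $\N$ is infinite and only finitely many values are forbidden), and that nonemptiness of $A_{\C,h}$ is genuinely used in the last step, since the empty set would otherwise be trivially semi-decidable.
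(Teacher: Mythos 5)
Your proof is correct and takes essentially the same approach as the paper: both diagonalize at a single coordinate, choosing an extension value avoiding the finitely many values $f_i(k)$ for $i\leq h(k+1)$ so that the extended prefix has $\C$-complexity exceeding $h$ at its length, and then invoke density to find a function of $\C$ with that prefix. Your added derivation of non-semi-decidability (including the observation that $f_0\in A_{\C,h}$, so the set is nonempty) just spells out the ``therefore'' that the paper leaves implicit.
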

\begin{proof}
For each $u=(u_0,\ldots,u_{n-1})\in\N^*$, there exist only finitely many $i\in\N$ such that $K(u_0,\ldots,u_{n-1},i)\leq h(n+1)$. Take any $i$ outside this finite set: the cylinder $[u_0,\ldots,u_{n-1},i]$ is disjoint from $A_{\C,h}$ but intersects $\C$, so $[u]\cap \C$ is not contained in $A_{\C,h}$.
\end{proof}

All the usual classes of total computable functions are dense in $\N^\N$. More generally,
\begin{proposition}
If $\C$ contains a non-isolated function $f$ then there is a computable order $h$ such that $f$ belongs to $A_{\C,h}$ but not to its interior, therefore $A_{\C,h}$ is not semi-decidable when the input function is given as oracle.
\end{proposition}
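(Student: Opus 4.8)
My plan is to build $h$ by a single effective stage construction that threads two competing demands: $h$ must tend to infinity (to qualify as an order), yet it must stay low at infinitely many places so that functions of high $\C$-complexity lying near $f$ fall outside $A_{\C,h}$. The place where non-isolation enters is the following claim: for every $n$ and every bound $B$ there is $g\in\C$ extending $f\res{n}$ with $K_\C(g)>B$. Indeed, were this to fail, $[f\res{n}]\cap\C$ would be contained in the finite set $\{f_0,\dots,f_B\}$; each of its finitely many members other than $f$ first differs from $f$ at some point $\ge n$, so beyond the largest such point the cylinder $[f\res{m}]\cap\C$ reduces to $\{f\}$, contradicting that $f$ is non-isolated. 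For such a $g$ one then has $K_\C(g\res{m})>B$ for all large $m$, once $g$ has diverged from the finitely many $f_j$ with $j\le B$.

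Let $i_0=K_\C(f)<\infty$; note $K_\C(f\res{m})\le i_0$ for every $m$. I would define $h$ in stages, keeping a current position $N_k$ and a current value $v_k$, starting from $N_0=0$, $v_0=i_0$. At stage $k$ I search, by dovetailing over pairs $(i,m)$, for some $i>v_k$ and $m\ge N_k$ such that $f_i$ extends $f\res{N_k}$ and $K_\C(f_i\res{m})>v_k$; the claim guarantees such a pair exists, so the search halts and the whole procedure is computable. Writing $m_k$ for the length found, I set $h\equiv v_k$ on the block $[N_k,m_k]$ and then put $N_{k+1}=m_k+1$ and $v_{k+1}=v_k+1$.

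The verification then splits into four routine checks. First, $h$ is non-decreasing and attains every value $i_0+k$, so it is an unbounded computable order. Second, since $v_k\ge i_0$ throughout, $h(m)\ge i_0\ge K_\C(f\res{m})$ for all $m$, whence $f\in A_{\C,h}$. Third, the witness $g=f_{i_k}$ produced at stage $k$ extends $f\res{N_k}$ and satisfies $K_\C(g\res{m_k})>v_k=h(m_k)$, so $g\notin A_{\C,h}$; as $N_k\to\infty$ this yields, for every $n$, a function of $[f\res{n}]\cap\C$ outside $A_{\C,h}$ (any witness found at a depth $N_k\ge n$ also extends $f\res{n}$), so $f$ is not in the interior of $A_{\C,h}$ in $\C$. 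Fourth, non-semi-decidability from an oracle follows exactly as in the preceding propositions: a machine accepting $f$ reads only a finite prefix $f\res{n}$ and hence accepts all of $[f\res{n}]\cap\C$, which would place $f$ in the interior, a contradiction.

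I expect the main obstacle to be precisely this tension in the construction. If $h$ were permitted to be bounded, the constant $h\equiv i_0$ would settle everything at once, since it dominates $K_\C(f\res{m})$ while being beaten by the high-complexity extensions of every $f\res{n}$; the real work is to force $h$ up to infinity while still reserving infinitely many plateaus low enough to be overtaken, and to arrange this by a \emph{halting} search so that $h$ is genuinely computable. The delicate point to confirm is therefore that each stage's search provably terminates, which is exactly what the claim secures.
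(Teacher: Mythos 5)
Your proof is correct, and it reaches the result by a noticeably different mechanism than the paper. Your claim does follow from non-isolation exactly as you argue (if $[f\res{n}]\cap\C\subseteq\{f_0,\ldots,f_B\}$, then beyond the last point of disagreement $[f\res{m}]\cap\C=\{f\}$), each stage's search terminates because the $g$ supplied by the claim satisfies $K_\C(g\res{m})>v_k$ for all large $m$, and the search is effective because $K_\C$ is computable on finite strings; the four verifications then go through as stated. The paper's construction has the same overall shape --- a step function $h\geq K_\C(f)$ whose breakpoints are computed one by one, with a high-complexity prefix inside every $[f\res{n}]$ witnessing non-interiority --- but it secures effectiveness differently: at stage $n$ it only enumerates $\C$, searching for $n+2$ pairwise distinct functions extending $f\res{n}$ and a length $p_n$ at which their prefixes are pairwise distinct, and then sets $h(p)=\min\{n\geq K_\C(f):p\leq p_n\}$. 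The existence of a bad prefix $u$ of length $p_n$ with $K_\C(u)>n$ is obtained by pigeonhole (at most $n+1$ strings of a given length have complexity $\leq n$) and is used only in the verification, never computed. So where you place the non-effective ingredient in the termination proof of a search that explicitly tests complexity, the paper places it in a counting argument and never evaluates $K_\C$ at all during the construction. What each buys: the paper's argument would survive in a setting where string complexity is only upper semi-computable, since it uses nothing but the enumeration of $\C$, whereas yours genuinely needs $K_\C$ computable on strings (true here, as the paper notes); conversely, your argument is indifferent to how many strings of a given length have complexity $\leq n$, so the adaptation in the paper's footnote (replacing $n+2$ by $2^{n+1}$ for other complexity notions) becomes unnecessary in your version --- you only need that $\{g\in\C:K_\C(g)\leq B\}$ is finite for each $B$.
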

\begin{proof}
We build $h\geq K_\C(f)$ so that $f\in A_{\C,h}$. As $f$ is not isolated in $\C$, for each $n$ there exist infinitely many functions in $[f\res{n}]\cap \C$. In particular there exist $n+2$\footnote{Here we use the fact that there are at most $n+1$ functions of $\C$-complexity at most $n$. For other notions of $\C$-complexity, this should be adapted, replacing $n+2$ with $2^{n+1}$ for instance.} pairwise distinct functions in that set and a $p$ such that all these functions differ on their prefixes of length $p$. As $\C$ is a c.e.\ class one can compute such a $p$. In this way we can construct an increasing computable sequence $p_n$. For each $n$ there exists $u\in \N^{p_n}$ such that $[u]$ intersects $\C$ and is contained in $[f\res{n}]$ and $K_\C(u)> n$, as there are at most $n+1$ sequences of a given length (here $p_n$) whose complexity is bounded by $n$.

Let $h(p)=\min\{n\geq K_\C(f):p\leq p_n\}$. For each $n\geq K_\C(f)$, $[f\res{n}]\cap\C$ is not contained in $A_{\C,h}$ as there exists $u$ of length $p_n$ such that $[u]$ intersects $\C$ but not $A_{\C,h}$: indeed, $K_\C(u)>n=h(|u|)$. 
\end{proof}

Now observe that the proof of Proposition \ref{prop_A} actually shows that $A_{\C,h}$ remains decidable if one is given $f$ via an oracle \emph{together with an upper bound on $K_\C(f)$}. This is actually the case for every decidable, and even semi-decidable property.

\begin{proposition}\label{prop_K}
Every semi-decidable set $A\subseteq\C$ is semi-decidable given an access to $f$ as oracle and any upper bound on $K_\C(f)$.
\end{proposition}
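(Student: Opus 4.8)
Since we regard $A\subseteq\C$ as a property presented by an index, to say that $A$ is \emph{semi-decidable} means precisely that the set $W=\{i\in\N:f_i\in A\}$ is computably enumerable. Fix $f\in\C$ and suppose we are given $f$ as an oracle together with an integer $m$ satisfying $m\ge K_\C(f)$. The plan is to exhibit a single procedure, using only the oracle $f$ and the number $m$, that halts exactly when $f\in A$. The sole role of the bound is that it makes the set $J=\{j\le m:f_j=f\}$ nonempty: indeed $K_\C(f)\le m$ and $f_{K_\C(f)}=f$, so $K_\C(f)\in J$.

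The key step is the reformulation
\[
 f\in A \quad\Longleftrightarrow\quad \forall j\le m\;\bigl(f_j\neq f \ \text{ or }\ j\in W\bigr).
\]
For the forward direction, if $f\in A$ then for each $j\le m$ either $f_j\neq f$, or else $f_j=f\in A$ and hence $j\in W$ by definition of $W$. For the converse, the nonemptiness of $J$ is exactly what is used: pick $j_0=K_\C(f)\le m$; since $f_{j_0}=f$ the disjunct $f_{j_0}\neq f$ fails, so the hypothesis forces $j_0\in W$, that is $f=f_{j_0}\in A$.

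It then remains to observe that the right-hand side is semi-decidable from $f$ and $m$. For each fixed $j\le m$ the predicate $f_j\neq f$ is semi-decidable relative to the oracle: search for some $n$ with $f_j(n)\neq f(n)$, using that $f_j$ is computable uniformly in $j$ while $f$ is supplied by the oracle. The predicate $j\in W$ is semi-decidable since $W$ is c.e. Hence each disjunction $(f_j\neq f)\lor(j\in W)$ is semi-decidable, and the finite conjunction over $j\in\{0,\ldots,m\}$ is semi-decidable as well: run the $m+1$ disjunctive searches in parallel and halt once all of them have halted. By the equivalence above this procedure halts iff $f\in A$.

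The point worth stressing is the obstacle that this reformulation sidesteps. The naive strategy would be to recover an index of $f$ from the oracle and feed it to the machine witnessing that $W$ is c.e. But one can only compute $K_\C(f\res{n})$, which approaches $K_\C(f)$ from below, and the proper approximating indices $K_\C(f\res{n})<K_\C(f)$ may index functions distinct from $f$ that nevertheless lie in $A$, producing spurious acceptances; equivalently, $f_j= f$ is only co-semi-decidable relative to the oracle, so no correct index can be isolated by a halting computation. The reformulation avoids testing equality altogether, replacing it by the semi-decidable test $f_j\neq f$, and this substitution is precisely where the argument becomes effective.
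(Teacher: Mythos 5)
Your proof is correct and takes essentially the same approach as the paper: the paper's procedure (given $f$ and $k\geq K_\C(f)$, reject each $i\leq k$ upon finding a disagreement $f_i\neq f$, accept each $i\leq k$ accepted by the semi-decision procedure, and halt once every $i\leq k$ is resolved) is exactly a semi-decision of your finite conjunction $\forall j\leq m\,(f_j\neq f \text{ or } j\in W)$. Your write-up merely makes explicit the logical equivalence and its verification, which the paper leaves implicit.
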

\begin{proof}
Given $f$ and $k\geq K_\C(f)$, one can progressively reject all numbers $i\leq k$ such that $f_i\neq f$. In parallel one can progressively accept all numbers $i\leq k$ that are accepted by the semi-decision procedure. Wait for a stage when every number $i\leq k$ is accepted or rejected. If this happens then accept $f$.
\end{proof}

It was proved in \cite{STACS15} that such a result also holds for the class of total computable functions and much more general classes of computable objects. The proof given here in the case of an effective class $\C$ is much easier because we only deal with total programs.

We can now state our main result: the cylinders and the sets of anticomplex functions are the basic decidable properties, from which all decidable and semi-decidable properties can be obtained.
\begin{theorem}
Let $A\subseteq \C$. The following conditions are equivalent:
\begin{itemize}
\item $A$ is semi-decidable,
\item $A$ is an effective union of sets of the form $[v]\cap A_{\C,h}$, i.e.
\begin{equation*}
A=\C\cap \bigcup_n([v_n]\cap A_{\C,h_n})
\end{equation*}
for some computable sequences of finite words $v_n\in\N^*$ and orders $h_n:\N\to\N$.
\end{itemize}
\end{theorem}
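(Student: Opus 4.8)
The plan is to prove the two implications separately, the reverse one being routine and the forward one carrying all the content.

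For the implication from the union form to semi-decidability, I would observe that, given a $\C$-index $i$ of $f$, membership of $f$ in a single set $[v_n]\cap A_{\C,h_n}$ is decidable: testing $f\in[v_n]$ is just comparing $f\res{|v_n|}$ with $v_n$, and $f\in A_{\C,h_n}$ is decidable by Proposition \ref{prop_A}. I would then dovetail these decision procedures over all $n$ and accept as soon as one of them answers positively; this semi-decides $\C\cap\bigcup_n([v_n]\cap A_{\C,h_n})=A$.

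For the forward direction I would start from Proposition \ref{prop_K}: since $A$ is semi-decidable there is a procedure $P$ that, given $f$ as an oracle together with any $k\geq K_\C(f)$, semi-decides $f\in A$. When $P$ halts it has read only a finite prefix $v=f\res{|v|}$, so each accepting run is witnessed by a pair $(v,k)$, and the set $T$ of all such acceptance pairs is c.e. Two structural facts drive the argument: first, $T$ is closed under extending the prefix, because running $P$ with the same bound $k$ on any oracle extending $v$ reproduces the accepting computation; second, for $f\in\C$ one has
\begin{equation*}
f\in A\iff \exists (v,k)\in T\ \big(v=f\res{|v|}\ \text{and}\ K_\C(f)\leq k\big),
\end{equation*}
where soundness uses that $P$ is correct whenever $k\geq K_\C(f)$, and completeness follows by taking $k=K_\C(f)$.

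It remains to turn this characterization into the required union. The cylinder $[v]$ will encode $v=f\res{|v|}$, while the anticomplex set must encode $K_\C(f)\leq k$; here I would use that $K_\C(f)=\sup_n K_\C(f\res n)$ with the prefixes nondecreasing, so that $K_\C(f)\leq k$ is exactly the condition $\forall n\,K_\C(f\res n)\leq k$, i.e.\ anticomplexity against the constant bound $k$. For each $(v,k)\in T$ I would keep the cylinder $[v]$ and attach an order $h_{v,k}\geq k$ read off from the enumeration of $T$: at each length $n$, let $h_{v,k}(n)$ be the largest $j\geq k$ such that every extension of $v$ of length $n$ with $\C$-complexity at most $j$ (there are finitely many, namely the length-$n$ prefixes of $f_0,\ldots,f_j$ that extend $v$) already carries a witness $(u,k')\in T$ with $k'\geq j$; the value $j=k$ is always admissible by the closure property, so $h_{v,k}$ is well defined and $\geq k$. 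Completeness is then immediate, since $h_{v,k}\geq k$ forces every $f\in A$ into the piece of its witnessing pair; soundness holds because any $g$ admitted by $[v]\cap A_{\C,h_{v,k}}$ satisfies $K_\C(g\res n)\leq h_{v,k}(n)$ for all $n$, so at the length where $K_\C(g\res n)$ reaches $K_\C(g)$ the defining property of $h_{v,k}$ yields a pair $(g\res n,k')\in T$ with $k'\geq K_\C(g)$, whence $g\in A$.

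The step I expect to be the main obstacle is exactly this construction of the $h_{v,k}$ as genuine computable orders. The naive constant bound $k$ is not unbounded, whereas a carelessly unbounded order destroys soundness: a function whose prefix complexity grows slowly can have arbitrarily large $K_\C(f)$ and need not lie in $A$. The resolution is to let $h_{v,k}$ climb above $k$ only at lengths where the enumeration of $T$ has certified that every newly admitted function is covered by a valid acceptance pair, and to verify that this recipe produces a total nondecreasing computable function whose growth is unbounded wherever $A$ permits it. Checking that the length parameter and the search through the c.e.\ set $T$ are interleaved so that the soundness witness is always found in time is the delicate bookkeeping, rather than any further conceptual difficulty.
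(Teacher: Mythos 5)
Your easy direction is fine, and the first half of your forward direction is also correct: the appeal to Proposition \ref{prop_K}, the c.e.\ set $T$ of acceptance pairs $(v,k)$, its closure under extending the prefix, and the identity $f\in A\iff\exists(v,k)\in T\,\bigl(v=f\res{|v|}\text{ and }K_\C(f)\leq k\bigr)$ for $f\in\C$ all hold. The gap is exactly at the step you flagged, but your proposed resolution cannot work: the functions $h_{v,k}$ you define are in general \emph{bounded}, hence not orders, and this is a structural failure, not bookkeeping. Indeed, suppose some $g\in[v]\cap\C$ with $K_\C(g)=j$ lies outside $A$. For every length $n\geq|v|$, the prefix $g\res{n}$ is an extension of $v$ of complexity at most $j$, and it can never carry a witness $(g\res{n},k')\in T$ with $k'\geq j$: by your own soundness property such a witness would force $g\in A$. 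So your defining condition fails at level $j$ at every length, and $h_{v,k}(n)<j$ for all $n$. Consequently $h_{v,k}$ is unbounded only when $[v]\cap\C\subseteq A$, i.e.\ only on cylinders already contained in $A$. In the worst case your construction produces nothing at all: take $A=\C\cap A_{\C,h_0}$ for a suitable order $h_0$ making $A$ nonempty, with $\C$ dense; this $A$ is decidable (Proposition \ref{prop_A}) but, as shown in the paper, it has empty interior in $\C$, so \emph{every} $h_{v,k}$ is bounded and your union misses all of $A$. Nor can the unboundedness requirement be waived: for a constant bound $h\equiv k$ one has $\C\cap A_{\C,h}=\{f\in\C:K_\C(f)\leq k\}$, and deciding this from an index is a minimal-index/function-equality problem, a $\Pi^0_1$ condition that is not semi-decidable in general -- the decision procedure of Proposition \ref{prop_A} needs $h$ to be unbounded in order to know when to stop checking, so with bounded $h$'s the other direction of the equivalence would break.

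The missing idea -- and what the paper does instead -- is to make the threshold climb only at lengths chosen so that the obligation at level $k+i+1$ concerns not \emph{all} extensions of $v$ of complexity $\leq k+i+1$, but only those that survived all the earlier prefix tests. Concretely, the paper extracts from Proposition \ref{prop_K} nested uniformly effective open sets $U_k$ with $A\cap\C_k=U_k\cap\C_k$, where $\C_k=\{f:K_\C(f)\leq k\}$, fixes a cylinder $[v]\subseteq U_{k+1}$, and inductively computes lengths $p_1<p_2<\cdots$ such that $[v]\cap\C_{k+1}^{p_1}\cap\cdots\cap\C_{k+i}^{p_i}\subseteq U_{k+i+1}$, where $\C_j^p$ is the set of length-$p$ words of complexity at most $j$. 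The induction goes through because the finitely many survivors of complexity at most $k+i+1$ lie in $U_{k+i+1}\cap\C_{k+i+1}=A\cap\C_{k+i+1}\subseteq U_{k+i+2}$, so a suitable $p_{i+1}$ exists and is computable by a finite-set argument. Then $h(p)=k+\min\{i\geq1:p\leq p_i\}$ is a genuine computable order, and the functions outside $A$ that block the growth of your $h_{v,k}$ are instead eliminated by one of the earlier tests $K_\C(\cdot\res{p_i})\leq k+i$, so they never have to be covered by witnesses at higher levels. Your scheme could in principle be repaired by admitting at level $j$ only extensions that passed all previous thresholds, but making that precise is exactly the paper's induction rather than a refinement of your pointwise definition.
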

\begin{proof}
For $k,p\in\N$ let
\begin{align*}
\C_k&=\{f:K_\C(f)\leq k\},\\
\C_k^p&=\{u\in\N^p:K_\C(u)\leq k\}=\bigcup_{f\in \C_k}[f\res{p}].
\end{align*}

By Proposition \ref{prop_K} there exist uniformly effective open sets $U_k\subseteq \N^\N$ such that $A\cap \C_k=U_k\cap\C_k$. One can take $U_{k+1}\subseteq U_k$, replacing $U_k$ with $U_k\cup U_{k+1}\cup\ldots$ if necessary. It follows that $A=\C\cap \bigcap_k U_k$.

Given $k\in \N$ and $v\in\N^*$ such that $[v]\subseteq U_{k+1}$, we now build a computable order $h$ such that $[v]\cap\C_{k+1}\subseteq [v]\cap A_{\C,h}\cap\C\subseteq A$. In order to obtain the announced families $v_n$ and $h_n$ to cover the whole set $A$, we will simply start from all possible $k\in\N$ and all $v$ in the effective enumeration of $U_{k+1}$.

We now define $h$, by first constructing a kind of inverse of $h$. More precisely we define a computable increasing sequence $p_i$ such that for all $i$,
\begin{equation*}
[v]\cap \C_{k+1}^{p_1}\cap\ldots\cap \C_{k+i}^{p_i}\subseteq U_{k+i+1}.
\end{equation*}

The base case $i=0$ is satisfied as $[v]\subseteq U_{k+1}$. Once $p_1,\ldots,p_i$ have been defined,
\begin{align*}
[v]\cap \C_{k+1}^{p_1}\cap \ldots\cap \C_{k+i}^{p_i}\cap \C_{k+i+1}&\subseteq U_{k+i+1}\cap \C_{k+i+1}\\
&\subseteq A\\
&\subseteq U_{k+i+2}.
\end{align*}

The left-hand side is a finite set. For each $f$ in that set, there is $p\in \N$ such that $[f\res{p}]\subseteq U_{k+i+2}$. As the set is finite there is a single $p$ that works for each $f$ in the finite set. As this finite set is computable, such a $p$ can be computed. We then define $p_{i+1}>p_i$ such that
\begin{equation*}
[v]\cap \C_{k+1}^{p_1}\cap \ldots\cap \C_{k+i}^{p_i}\cap \C_{k+i+1}^{p_{i+1}}\subseteq U_{k+i+2}.
\end{equation*}

We then have
\begin{equation*}
[v]\cap \C_{k+1}\subseteq [v]\cap \bigcap_{i\geq 1} \C^{p_i}_{k+i}\cap\C\subseteq \bigcap_n U_n\cap\C=A.
\end{equation*}

Now, $g\in \bigcap_{i\geq 1} \C^{p_i}_{k+i}$ if and only if for all $i$, $K_\C(g\res{p_i})\leq k+i$. Let $h$ be the computable order defined by $h(p)=k+\min\{i\geq 1:p\leq p_i\}$. If $g\in A_{\C,h}$ then $K_\C(g\res{p_i})\leq h(p_i)=k+i$ for all $i$. As $h(p)\geq k+1$ for all $p$, $\C_{k+1}\subseteq A_{\C,h}$.
\end{proof}

A set $A\subseteq\C$ is then decidable from $\C$-indices if and only if both $A$ and $\C\setminus A$ can be expressed as effective unions of sets $[v]\cap A_{\C,h}$.

\end{document}